
\documentclass[12pt,onecolumn,oneside,letterpaper]{IEEEtran}

\usepackage{cite}
\usepackage[final]{graphicx}
\usepackage{amsmath}
\usepackage{times}
\usepackage{latexsym}
\usepackage{graphicx}
\usepackage{bm}
\usepackage{amssymb}
\usepackage[center]{caption2}
\usepackage{stfloats}
\usepackage{cases}
\usepackage{array}
\usepackage{setspace}
\usepackage{fancyhdr}
\usepackage{fancyhdr}
\usepackage{algorithm}
\usepackage{algpseudocode}
\usepackage{textcomp}
\usepackage{wasysym}
\usepackage{balance} 
\usepackage{flushend} 
\usepackage{url}
\usepackage{xcolor}


\interdisplaylinepenalty=2500

\hyphenation{op-tical net-works semi-conduc-tor}

\allowdisplaybreaks[4]

\newcaptionstyle{mystyle1}{%
  \centering TABLE  \captiontext \par}
\captionstyle{mystyle1}
\newcaptionstyle{mystyle2}{%
  \captionlabel $.$ \: \captiontext \par}
  \captionsetup{font=small}
\captionstyle{mystyle2}
\newcaptionstyle{mystyle3}{%
  \captionlabel $.$ \: \captiontext \par}
\captionstyle{mystyle3}

\newtheorem{theorem}{Theorem}

\doublespacing

\begin{document}

\title{Joint Transmitter and Receiver Design \\ for Pattern Division Multiple Access}


\author{Yanxiang~Jiang,~\IEEEmembership{Member,~IEEE}, 
Peng~Li, Zhiguo~Ding,~\IEEEmembership{Senior~Member,~IEEE}, Fuchun~Zheng,~\IEEEmembership{Senior~Member,~IEEE}, Miaoli~Ma, and~Xiaohu~You,~\IEEEmembership{Fellow,~IEEE}
\thanks{Manuscript received \today.}
\thanks{Y. Jiang, P. Li, M. Ma,  and X. You are with the National Mobile Communications Research Laboratory,
Southeast University, Nanjing 210096, China (e-mail: \{yxjiang, xhyu\}@seu.edu.cn).}
\thanks{Z. Ding is with the School of Computing and Communications, Lancaster
University, Lancaster LA1 4YW, U.K. (e-mail: z.ding@lancaster.ac.uk).}
\thanks{F. Zheng is with the National Mobile Communications Research Laboratory, Southeast University, Nanjing 210096, China, and the Department of Electronic Engineering, University of York, York, YO10 5DD, U.K. (e-mail: fzheng@seu.edu.cn).}
}

\maketitle

\begin{abstract}
In this paper, a joint transmitter and receiver design for pattern division multiple access (PDMA) is proposed. At the transmitter, pattern mapping utilizes power allocation to improve the overall sum rate, and beam allocation to enhance the access connectivity. At the receiver, hybrid detection utilizes a spatial filter to suppress the inter-beam interference caused by beam domain multiplexing, and successive interference cancellation to remove the intra-beam interference caused by power domain multiplexing. Furthermore, we propose a PDMA joint design approach to optimize pattern mapping based on both the power domain and beam domain. The optimization of power allocation is achieved by maximizing the overall sum rate, and the corresponding optimization problem is shown to be convex theoretically. The optimization of beam allocation is achieved by minimizing the maximum of the inner product of any two beam allocation vectors, and an effective dimension reduction method is proposed through the analysis of pattern structure and proper mathematical manipulations. Simulation results show that the proposed PDMA approach outperforms the orthogonal multiple access and power-domain non-orthogonal multiple access approaches even without any optimization of pattern mapping, and that the optimization of beam allocation yields a significant performance improvement than the optimization of power allocation.
\end{abstract}

\begin{keywords}
Pattern division multiple access, pattern mapping, power allocation, beam allocation.
\end{keywords}

\newpage

\section{Introduction}

    With the new challenges of explosive mobile data growth, tremendous increase in the number of connected devices, and continuous emergence of new service requirements, future communication systems with high spectral efficiency are needed.
    In order to efficiently support unprecedented requirements for system sum rate and access connectivity, researchers from both industry and academia are focusing on the design of next-generation multiple access techniques, particularly non-orthogonal multiple access (NOMA) \cite{Ding2017, Andrews2014What}.

    In mobile communications systems, the design of multiple access schemes is of great importance to increase the  sum rate in a cost-effective manner.
    In general, multiple access schemes can be classified into orthogonal and non-orthogonal ones based on the way wireless resources are allocated to the users.
    Orthogonal multiple access (OMA) schemes, such as orthogonal frequency division multiple access (OFDMA) in downlink and single-carrier frequency division multiple access (SC-FDMA) in uplink, are adopted in the 4G mobile communication systems such as Long-Term Evolution (LTE) and LTE-Advanced (LTE-A) \cite{Ghosh2010LTE}.
    In order to attain  further enhancements in  sum rate and access connectivity, more advanced multiple access schemes need to be developed.
    Actually, NOMA schemes are optimal in the sense of achieving the capacity region of the broadcast channel \cite{Caire2003On}.
    In NOMA schemes, multi-user signals are superposed in the same time and frequency  resources via code domain and/or power domain multiplexing at the transmitter, and separated at the receiver by multi-user detection based on successive interference cancellation (SIC) or message passing algorithm (MPA).
    Recently, several representative NOMA schemes have been proposed, such as power-domain NOMA (PD-NOMA), and sparse code multiple access (SCMA).
    PD-NOMA was introduced in \cite{Saito2013Non} by using superposition coding at the transmitter and SIC at the receiver, which lays the foundation for NOMA when a single resurce block, such as OFDM subcarrier, is available.
    SCMA was proposed in \cite{Nikopour2013Sparse} by efficiently using multiple resource blocks available in the system and mapping bit streams  directly to sparse codewords, which is thus amenable  to the  use of MPA with acceptable complexity \cite{Au2014Uplink}.

    Different from the above mentioned NOMA schemes, pattern division multiple access (PDMA) adopts PDMA patterns to separate user signals at the transmitter, where the PDMA patterns can be realized in multiple resource domains \cite{Dai2014Successive, Chen2017Pattern}.
   By means of multiple-domain multiplexing, PDMA can make the best of wireless resources to increase the sum rate with affordable computational complexity.
    Quasi-orthogonal space-time block codes in spatial domain were utilized in \cite{Mao2016Pattern} to realize resource multiplexing, where the dimension of pattern matrix scales with the number of transmitter antennas.
    The performance of cooperative PDMA was analyzed in \cite{Tang2017Performance}, which was shown to outperform cooperative OMA in terms of sum rate when the same target data rate requirement was assigned to users.
    Recently, more and more advanced receivers have been investigated for PDMA.
    Iterative detection and decoding algorithm was proposed in \cite{Ren2016Advanced}, and SIC iterative processing based on minimum mean square error (MMSE) detection and channel decoding was proposed in \cite{Kong2016Multiuser}.
    The pattern design was studied in \cite{Ren2016Pattern} for massive machine type communication (mMTC) as well as enhanced mobile broadband (eMBB) deployment scenarios, and several non-optimal design criteria were proposed.
    However, little has been  reported on some joint design for the transmitter and receiver yet, especially based on multiple-domain multiplexing.
    Although PDMA is a promising candidate for future communication systems,  comprehensive and thorough researches are still needed on the pattern and transceiver design.

    Furthermore, as one of the key technologies of future mobile communications systems, large-scale antenna arrays (LSA) have been put forward to significantly improve the system sum rate with extra degrees of freedom which facilitate transmit diversity and spatial multiplexing gains \cite{Andrews2014What, Marzetta2010Noncooperative, Wang2016An}.
    Facing a massive number of connected devices, LSA can provide sufficient spatial resources.
    More recently, the application of LSA to NOMA has been receiving growing attention for further performance improvement \cite{Zhang2015Performance, Ding2016Design}.

    Motivated by the aforementioned discussions, we propose a joint transmitter and receiver design  for PDMA.
    The proposed approach is designed based on both the power domain and beam domain in a joint manner.
    Pattern mapping at the transmitter utilizes power allocation and beam allocation to superpose user signals, while hybrid detection at the receiver  employs  a spatial filter (SF) and SIC  to separate the superposed multiple-domain signals.
    Furthermore, the optimization of  pattern mapping is investigated.
    By theoretically proving the convexity of the corresponding sum rate maximization problem, a globally optimal power allocation policy can be readily obtained.
    Through the analysis of pattern structure and proper mathematical manipulations, an effective dimension reduction method is proposed to solve the challenging optimization problem concerning beam allocation.

    The rest of the paper is organized as follows.
    The system model is described in Section II.
    The proposed PDMA joint design approach including pattern mapping at the transmitter and hybrid detection at the receiver is presented in Section III.
    The optimization of pattern mapping including power allocation and beam allocation is presented in Section IV.
    Simulation results are shown in Section V.
    Final conclusions are drawn in Section VI.

\section{System Model}

    In this paper, we consider a downlink transmission scenario with one BS communicating with multiple users.
    The BS is equipped with a LSA, where a finite number of antennas cooperate with each other and form an antenna cluster (AC) to fully exploit the cooperation gain \cite{Jiang2015Optimal}.
    Assume that there are multiple ACs located in the BS, each AC equipped with \(N_T\) antennas forms \(N\) beams with \(N_T \geq N\), and
    each user has \(N_R\) antennas.
     {For the ease of the following description and analysis,  all the users in the corresponding AC coverage are assumed to
    constitute a user group (UG).}
	Assume each UG contains \(K\) users and an AC covers a UG with \(N_T \le KN_R\) and \(N \le K \le 2^N-1\) \cite{Dai2014Successive}.
    In this case, one beam will have to support more than one user,
     {i.e., some users in the same UG will share one beam.}
    Without loss of generality, we simplify the scenario into the case where an AC communicates with a UG.

    Let \(\boldsymbol{G}_k \in {\mathbb{C}^{N_R \times N_T}}\) denote the channel matrix between the considered AC and the \(k\)-th user in the considered UG.
    Assume that the BS has perfect channel state information (CSI).
    BF at the transmitter is generated based on the CSIs of $N$ target users from the UG.
    Let $n_{\rm{tar}}$ denote one of the target users covered by the \(n\)-th beam, and the set of target users is expressed as \(\Omega  = \left\{ {1_{\rm{tar}}}\text{, }{2_{\rm{tar}}}\text{, } \cdots \text{, }{n_{\rm{tar}}}\text{, } \cdots \text{, }{N_{\rm{tar}}} \right\}\).
    Let ${\boldsymbol{f}_n \in {\mathbb{C}^{N_T \times 1}}}$ denote the BF vector of the \(n\)-th beam, which is generated based on the CSI of the user $n_{\rm{tar}}$.
    Let $\boldsymbol{F} \in {\mathbb{C}^{N_T \times N}}$ denote the BF matrix and it can be expressed as \(\boldsymbol{F}=\left[ {\boldsymbol{f}_1\text{, }\boldsymbol{f}_2\text{, } \cdots \text{, } \boldsymbol{f}_n\text{, } \cdots \text{, } \boldsymbol{f}_N} \right]\).
     {According to different user requirements, the BF matrix \(\boldsymbol{F}\) can be constructed based on the minimum mean square error (MMSE) or zero-forcing (ZF) criteria.}

    At the transmitter, let \(\boldsymbol{t} \in {\mathbb{C}^{N \times 1}}\) denote the superposed signal vector after pattern mapping, and the details for the design of $\boldsymbol{t}$ will be provided in the next section.
    Let \(\boldsymbol{x} \in {\mathbb{C}^{N_T \times 1}}\) denote the transmit signal vector from the AC, and it can be expressed as follows
\begin{equation}\begin{split}\label{transmitted_signal}
\boldsymbol{x} = \boldsymbol{F} \boldsymbol{t}\text{.}
\end{split}\end{equation}
At the receiver, let \(\boldsymbol{y}_k \in {\mathbb{C}^{N_R \times 1}}\) denote the received signal vector for the \(k\)-th user.
    Then, it can be expressed as follows
\begin{equation}\begin{split}\label{received_signal}
{\boldsymbol{y}_k} = {\boldsymbol{G}_k}\boldsymbol{x} + {\boldsymbol{w}_k}{,} \ k = 1, 2, \cdots, K,
\end{split}\end{equation}
    where \(\boldsymbol{w}_k \sim \mathcal{CN}(\bold{0}\text{, } \sigma_k^2{\boldsymbol{I}_{N_R})}\) denotes the additive white Gaussian noise vector whose elements have zero mean and variance $\sigma_k^2$.

    In this paper, by constructing the superposed signal vector \(\boldsymbol{t}\)
at the transmitter and detecting the received signal vector \(\boldsymbol{y}_k\) at
the receiver, a PDMA joint design approach based on both the power domain and beam domain is proposed.

\section{The Proposed PDMA Joint Design Approach}

In this section, we present our proposed PDMA joint design approach  as illustrated in Fig. \ref{scheme}, where \(K\) user signals are superposed upon \(N\) beams after the process of pattern mapping at the transmitter and detected at the receiver by means of hybrid detection.

\begin{figure}[!t]
\centering
\includegraphics[width=0.9\textwidth]{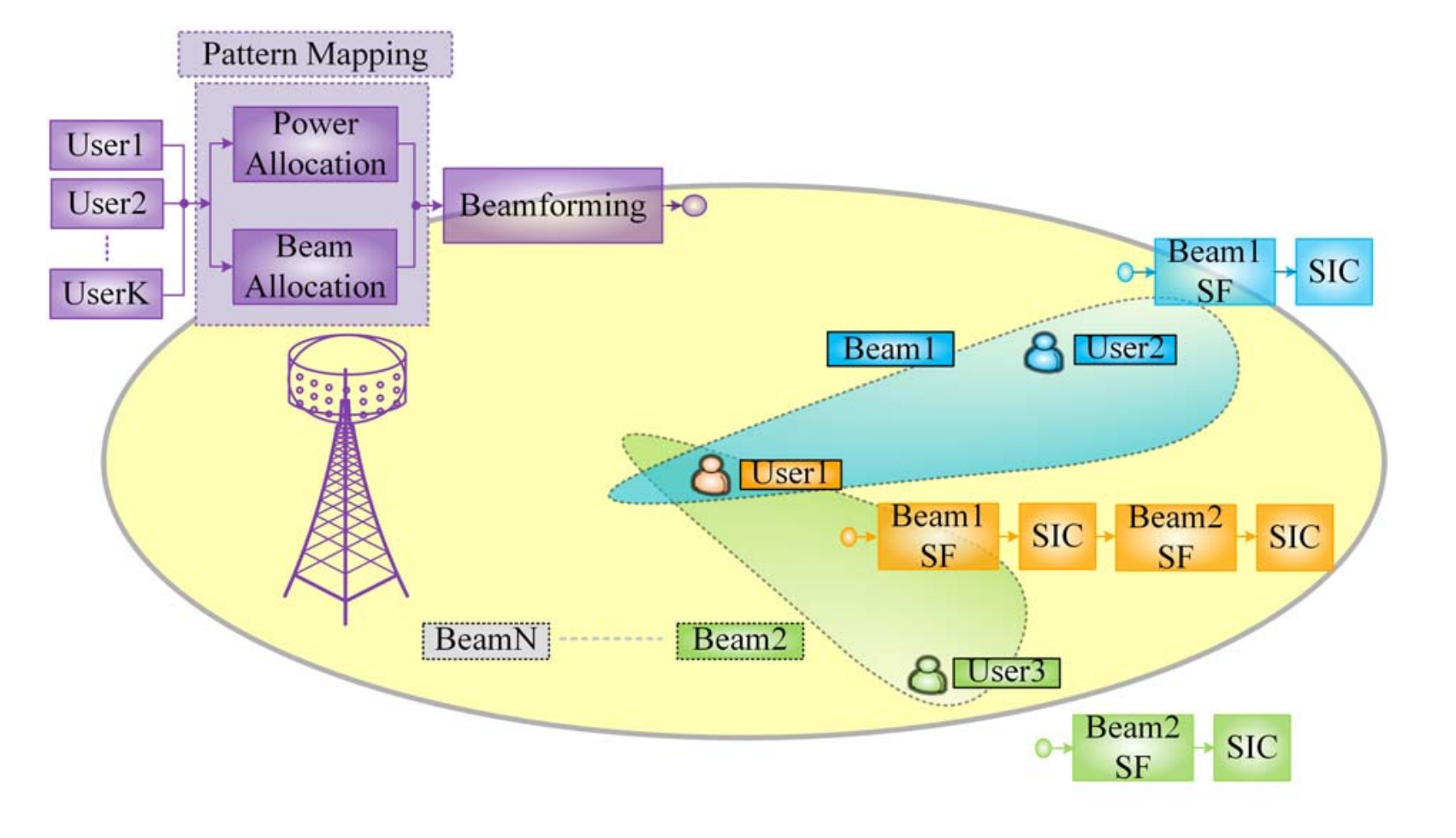}
\caption{Illustration of the proposed PDMA joint design approach.}
\label{scheme}
\end{figure}

\subsection{Pattern Mapping at the Transmitter}

The multiple-domain multiplexing is one of  the key aspects of pattern mapping in PDMA.
In general, the power domain is chosen as the fundamental  multiplexing domain in pattern mapping \cite{Dai2014Successive, Chen2017Pattern}.
However, when only power domain is utilized, the stringent requirement in power allocation policy restricts its generalization 
to various scenarios.
    In fact, the combination of multiple domains can make the most of wireless resources and generalize PDMA to various application scenarios.
    By considering LSA, multiple beams in downlink can serve as spatial resources.
    Specifically, beams are shared by multiple users with different power, and the corresponding allocation policy depends on pattern mapping.

    In this paper, pattern mapping of our proposed PDMA approach is designed based on the combination of power domain and beam domain.
    Power resources are chosen as the fundamental multiplexing domain and beam resources as the key multiplexing domain in the pattern mapping.
As shown in Fig. \ref{scheme}, pattern mapping at the transmitter utilizes power allocation and beam allocation to superpose multiple-domain signals.
    Let \(\boldsymbol{s} \in {\mathbb{C}^{K \times 1}}\) denote the transmit symbol vector for the UG with \(\boldsymbol{s} \sim \mathcal{CN}(\bold{0}\text{, }\boldsymbol{I}_K)\) and \({\left[ \boldsymbol{s} \right]_{k}} = {s_{k}}\) the transmit symbol for the \(k\)-th user.
    Let \(\boldsymbol{P} \in {\mathbb{R}^{N \times K}}\) denote the power allocation matrix and \({\left[ \boldsymbol{P} \right]_{nk}} = { {p_{nk}}}\) the transmit power allocated to the \(k\)-th user in the \(n\)-th beam.
    Assume that each beam is allocated the same power $P_{\text{b}}$, i.e., $\sum\limits_{k = 1}^K {{p_{nk}}}=P_{\text{b}}, \ n = 1, 2, \cdots, N$.
    Let \(\boldsymbol{B} \in {{[0,1]}^{N \times K}}\) denote the beam allocation matrix and \({\left[ \boldsymbol{B} \right]_{nk}} = {b_{nk}}\) the beam allocation indicator for the \(k\)-th user in the \(n\)-th beam with $b_{nk}=1$ if the $k$-th user is covered by the $n$-th beam and $b_{nk}=0$ otherwise.
    Correspondingly, the superposed signal vector after pattern mapping can be expressed as follows
\begin{equation}\begin{split}\label{superposedSignal}
\boldsymbol{t} = \left( {\boldsymbol{B} \circ {\boldsymbol{P}}^{1/2}} \right) \boldsymbol{s}\text{,}
\end{split}\end{equation}
    where $\circ$ denotes the operation of  Hadamard product, and \({\left[ \boldsymbol{t} \right]_{n}} = t_n = \sum\limits_{k = 1}^K {{b_{nk}}\sqrt {{p_{nk}}} {s_k}}, n = 1, 2, \cdots, N \).

\subsection{Hybrid Detection at the Receiver}

    For the received signals, we propose to use hybrid detection as illustrated in Fig. \ref{scheme}, where SF is used to suppress the inter-beam interference caused by beam domain multiplexing and SIC is then used  to remove the intra-beam interference caused by power domain multiplexing.

    {Firstly, SF is performed for the received signal vector to suppress the inter-beam interference.}
    Let \({\boldsymbol{V}_{k}} \in {\mathbb{C}^{N_R \times N}}\) denote the SF matrix for the \(k\)-th user and \({\boldsymbol{v}_{nk}}\) the \(n\)-th column vector of \({\boldsymbol{V}_{k}}\).
    After the procedure of SF, the scalar received signal \(z_{nk}\) can be expressed as follows
\begin{equation}\label{scalar_received_signal}
{z_{nk}} = \boldsymbol{v}_{nk}^H{\boldsymbol{y}_k}{\rm{ = }}\boldsymbol{v}_{nk}^H{\boldsymbol{G}_k}{\boldsymbol{f}_n}{t_n} + \boldsymbol{v}_{nk}^H{\boldsymbol{G}_k}\sum\limits_{n'=1\hfill\atop n'\ne n\hfill}^N {{\boldsymbol{f}_{n'}}{t_{n'}}}  + \boldsymbol{v}_{nk}^H{\boldsymbol{w}_k}, \ n=1,2,\cdots,N, \ k = 1,2,\cdots, K,
\end{equation}
    where the first term of the right hand denotes the combination of the desired information and intra-beam interference, and  the other two terms denote the inter-beam interference and noise, respectively.
     {According to different user requirements, the SF matrix \({{\boldsymbol{V}_{k}}}\) can be constructed based on the MMSE or ZF criteria.}
    If \({\boldsymbol{V}_{k}}\) is constructed based on the MMSE criterion,  it can then be expressed as follows  \cite{Higuchi2013Non}
\begin{equation}\begin{split}\label{SF_MMSE}
{\boldsymbol{V}_{k}} & = \mathop {\min }\limits_{{\boldsymbol{\tilde V}_k}} {\mathbb E}\left\{ {\left\| {\boldsymbol{t} - \boldsymbol{\tilde V}_k^H{\boldsymbol{y}_k}} \right\|_2^2} \right\} \\
& = {\left( {{\boldsymbol{G}_k}\boldsymbol{FA}{\boldsymbol{F}^H}\boldsymbol{G}_k^H + \sigma _k^2{{\bf{I}}_{N_R}}} \right)^{ - 1}}{\boldsymbol{G}_k}\boldsymbol{FA}{,}
\ k=1,2,\cdots,K,
\end{split}\end{equation}
    where \(\boldsymbol{A} \buildrel \Delta \over = \mathbb{E}\left\{ {\boldsymbol{t}{\boldsymbol{t}^H}} \right\}\) with its element being expressed as \({\left[ \boldsymbol{A} \right]_{ij}} = \sum\limits_{k = 1}^K {{b_{ik}}{b_{jk}}\sqrt {{p_{ik}}{p_{jk}}} } \).
    Note that the computational complexity of the matrix inversion involved in (\ref{SF_MMSE}) is approximately $O(N_R^3)$.
    If \({\boldsymbol{V}_{k}}\) is constructed based on the ZF criterion, it can then be expressed as follows
\begin{equation}\begin{split}\label{SF_ZF}
{\boldsymbol{V}_k} = {\boldsymbol{G}_k}\boldsymbol{F}{\left( {{\boldsymbol{F}^H}\boldsymbol{G}_k^H{\boldsymbol{G}_k}\boldsymbol{F}} \right)^{ - 1}}, \ k=1,2,\cdots,K{.}
\end{split}\end{equation}
    The computational complexity of the matrix inversion involved in (\ref{SF_ZF}) is approximately $O(N^3)$.
    In a practical communication  scenario, both $N_R$ and $N$ are not so large.
    Therefore, the computational complexity of the SF procedure can be affordable.

    {Secondly, normalization is applied to the scalar received signal to meet the implementation condition of SIC.}
    For a fixed channel realization, the inter-beam interference and noise term in \(z_{nk}\) is   {assumed independently Gaussian distributed} with  mean zero and variance $\sum\limits_{n' = 1\hfill\atop n' \ne n\hfill}^N {{{\left| {\boldsymbol{v}_{nk}^H{\boldsymbol{G}_k}{\boldsymbol{f}_{n'}}} \right|}^2}P_{\text{b}}}  + \sigma _k^2{{\left\| {{\boldsymbol{v}_{nk}}} \right\|}^2}$.
    Correspondingly, the expression in (\ref{scalar_received_signal}) can be reshaped through normalization as follows
\begin{equation}\begin{split}\label{normalized}
{z'_{nk}} = {h_{nk}}\sum\limits_{k' = 1}^K {{b_{nk'}}\sqrt {{p_{nk'}}} {s_{k'}}}  + {q_{nk}}{,} \ n=1,2,\cdots,N, \ k =1,2,\cdots,K,
\end{split}\end{equation}
    where \(q_{nk}\) denotes the sum of the inter-beam interference and noise after normalization with \(\mathbb{E}\left[ {{{\left| {{q_{nk}}} \right|}^2}} \right] = 1\), \(h_{nk}\)  the equivalent normalized channel gain between the \(k\)-th user and the AC which can be expressed as follows
\begin{equation}\label{h_nk}
{h_{nk}} =  {\frac{1}{\sqrt {\sum\limits_{n' \ne n, n' = 1 
}^N {{{\left| {\boldsymbol{v}_{nk}^H{\boldsymbol{G}_k}{\boldsymbol{f}_{n'}}} \right|}^2}P_{\text{b}}}  + \sigma _k^2{{\left\| {{\boldsymbol{v}_{nk}}} \right\|}^2}}}}{\boldsymbol{v}_{nk}^H{\boldsymbol{G}_k}{\boldsymbol{f}_n}}, \
 n=1,2,\cdots,N, \ k =1,2,\cdots,K{.}
\end{equation}
    Let $\boldsymbol{H} \in {\mathbb{R}^{N \times K}}$ denote the equivalent normalized channel gain matrix between the considered UG and AC, and it satisfies ${\left[ \boldsymbol{H} \right]_{nk}} = {h_{nk}}$.
    Then, the multiple-input multiple-out (MIMO) channel between the \(k\)-th user and the considered AC can be  degraded into  single-input single-out (SISO) channels after the normalization \cite{Higuchi2013Non}, which facilitates the implementation of  SIC.

    {Finally, SIC is applied to the normalized scalar received signal to remove the intra-beam interference,}
    and it is  used to decode symbols iteratively by subtracting the detected symbols of weak users first to facilitate the detection of strong users.
    Without loss of generality, we assume that the \(K\) users are sorted in an ascending order of normalized channel gain \(h_{nk}\) with respect to the index number \(k\) for any index number \(n\).
    For instance, \(\left| {{h_{nk}}} \right| \le \left| {{h_{nk'}}} \right|\) holds for any \(n=1,2, \cdots, N\) if \(1 \le k \le k' \le K\).
    Consequently, the \(k'\)-th user can correctly decode the signal symbol in spite of the interference of the \(k\)-th user in the \(n\)-th beam.
    Correspondingly, the signal to interference plus noise ratio (SINR) of the \(k\)-th user in the \(n\)-th beam can be expressed as follows
\begin{equation}\begin{split}\label{SINR}
{\gamma _{nk}} = {\rm{ }}\left\{ {\begin{array}{*{20}{l}}
{\frac{{{\left| h_{nk}\right|}^2{b_{nk}}{p_{nk}}}}{{1 + {\left| h_{nk}\right|}^2\sum\limits_{k' = k + 1}^K {{b_{nk'}}{p_{nk'}}} }}{, } }& {n=1,2,\cdots, N, \ k = 1,2{, } \cdots {, }K - 1}{, }\\
{{\left| h_{nk}\right|}^2{b_{nk}}{p_{nk}}{, }} & {n=1,2,\cdots, N, \ k = K}{.}
\end{array}} \right.
\end{split}\end{equation}
    And  the overall sum rate of the AC can then be expressed as follows
\begin{equation}\begin{split}\label{sumrate}
{R_{{\rm{sum}}}} = \sum\limits_{n = 1}^N {\sum\limits_{k = 1}^K {{{\log }_2}\left( {1 + {\gamma _{nk}}} \right)} }\text{.}
\end{split}\end{equation}

    Note that the performance gain of our proposed approach profits from  enhanced access connectivity supported by the multiple-domain multiplexing and strong SINR ensured by the hybrid detection.
    The hybrid detection at the receiver is designed based on the multiple-domain multiplexing at the transmitter, and  the corresponding optimization of pattern mapping at the transmitter can further improve the system performance.

\section{Optimization of Pattern Mapping}

	Pattern mapping is optimized based on both the power domain and beam domain in the proposed PDMA joint design approach.
    Power resources act as the fundamental multiplexing domain of the pattern, and beam resources act as  the key multiplexing domain of the pattern.
     {Given the integer beam allocation indicator $b_{nk}$, the optimization problem of the overall sum rate maximization falls into the scope of combinatorial programming, which is hard to be solved directly. Just as we pointed out in \cite{Jiang_TVT16}, a brute force
approach is generally required for obtaining a global optimal
solution. However, such an approach has an exponential complexity with respect to (w.r.t.) the number of beams and the number of users, and it is computationally impracticable even for a small-size system.
In our previous work in \cite{Jiang_VTC17S}, by merging the beam allocation indicator $b_{nk}$ and the transmit power $p_{nk}$ into one variable, the original 2-D optimization problem of pattern mapping can be simplified  to
a 1-D optimization problem. However, by using that approach, outlier may exist in the obtained pattern mapping results.
Therefore, in this paper, we propose an efficient  approach to solve this challenging problem.} 
    Pattern design of the proposed PDMA joint design approach involves the optimization of  power allocation and that of beam allocation.

\subsection{Optimization of Power Allocation}

    In this subsection, the optimal power allocation  of the proposed PDMA approach is investigated.
    In general, NOMA schemes are used to increase system rates in  future mobile communications systems.
    The optimization of power allocation  can be formulated as a sum rate maximization problem.
    Besides, BF at the transmitter is generated based on the CSI of the target user, and hence the target user should be allocated with nonzero power to ensure its availability.
    Let ${\delta _{nk}}$ denote the lower bound of the transmit power allocated to the $k$-th user in the $n$-th beam, and it can be expressed as follows
\begin{equation}\begin{split}\label{delta_nk}
{\delta _{nk}} = \left\{ {\begin{array}{*{20}{l}} {\varepsilon\text{, }} &{ \text{if} \   n = 1,2,\cdots,N, \  k \in \Omega }, \\ {0\text{, } } & {\text{else} }
\text{, } \end{array}} \right.
\end{split}\end{equation}
where $\varepsilon$ denotes the slack variable for the target user.
    Therefore, the power allocation problem can be expressed as follows
\begin{equation}\begin{split}\label{OP}
\mathop {\max }\limits_{p_{nk} \atop n=1,2,\cdots,N,  k=1,2,\cdots,K} \mathop {}\nolimits^{} & \mathop {R_{\text{sum}}} = \sum\limits_{n = 1}^N {\sum\limits_{k = 1}^K {{{\log }_2}\left( {1 + {\gamma _{nk }}} \right)} } \\
\text{s.t. } \ & \text{C1 : } {p_{nk}} \ge {\delta _{nk}}\text{, } n=1,2,\cdots,N, \ k=1,2,\cdots,K, \\
\quad & \text{C2 : } \sum\limits_{n = 1}^N {\sum\limits_{k  = 1}^K {{p_{nk}}} }  \le {P_{{\text{sum}}}}{, } \\
\quad & \text{C3 : } {\log _2}\left( {1+{\gamma_{nk}}} \right) \ge {R_{nk}^{{\text{min}}}}\text{, } n=1,2,\cdots,N, \ k=1,2,\cdots,K,
\end{split}\end{equation}
    where ${P_{{\text{sum}}}}$ denotes the maximum sum transmit power for the AC, and ${R_{nk}^{{\text{min}}}}$  the minimum rate requirement for the \(k\)-th user in the \(n\)-th beam.
    Correspondingly, we have the following theorem.
    \begin{theorem}
    The optimization problem in (\ref{OP}) is convex.
    \end{theorem}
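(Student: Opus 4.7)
The plan is to establish convexity of (12) by separately verifying that (i) the objective $R_{\text{sum}}$ is concave in the decision variables $\{p_{nk}\}$ and (ii) the feasible set defined by C1--C3 is convex. Throughout, $h_{nk}$, $b_{nk}$, $\delta_{nk}$, $R_{nk}^{\text{min}}$, and $P_{\text{sum}}$ are treated as given constants, since this subsection fixes the beam allocation and optimizes only the powers.

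First I would rewrite each term $\log_2(1+\gamma_{nk})$ in a more tractable form. Introducing $S_{n,k}=\sum_{k'=k}^{K}b_{nk'}p_{nk'}$ with the convention $S_{n,K+1}=0$, the definition in (9) gives the identity $1+\gamma_{nk}=(1+|h_{nk}|^2 S_{n,k})/(1+|h_{nk}|^2 S_{n,k+1})$. Summing the base-2 logarithm inside a single beam $n$ and regrouping adjacent terms yields
\begin{equation*}
R_n=\log_2\!\bigl(1+|h_{n1}|^2 S_{n,1}\bigr)+\sum_{k=1}^{K-1}\Bigl[\log_2\!\bigl(1+|h_{n,k+1}|^2 S_{n,k+1}\bigr)-\log_2\!\bigl(1+|h_{nk}|^2 S_{n,k+1}\bigr)\Bigr].
\end{equation*}
The leading summand is clearly concave in $S_{n,1}\ge 0$. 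For each bracketed term, set $g(S)=\log_2(1+aS)-\log_2(1+bS)$ with $a=|h_{n,k+1}|^2$ and $b=|h_{nk}|^2$. A direct computation gives $g''(S)=(1/\ln 2)\bigl[b^2/(1+bS)^2-a^2/(1+aS)^2\bigr]$, which is nonpositive because the ascending-order assumption $|h_{nk}|\le|h_{n,k+1}|$ (that is, $a\ge b\ge 0$) implies $a/(1+aS)\ge b/(1+bS)$. Hence every bracketed term is concave in $S_{n,k+1}$. Since each $S_{n,k}$ is an affine, nonnegative function of $\mathbf{p}$, concavity is preserved under composition, and $R_{\text{sum}}=\sum_n R_n$ is concave in $\{p_{nk}\}$ by additivity across beams.

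Next I would check the feasible region. C1 and C2 are already linear in $\{p_{nk}\}$. C3, namely $\log_2(1+\gamma_{nk})\ge R_{nk}^{\text{min}}$, is equivalent to $\gamma_{nk}\ge 2^{R_{nk}^{\text{min}}}-1$; multiplying through by the strictly positive denominator of $\gamma_{nk}$ recasts this as a linear inequality in $\{p_{nk}\}$. Therefore C1--C3 jointly define a polyhedron, which is convex. Combining this with the concavity of the objective (equivalently, convexity of $-R_{\text{sum}}$) shows that (12) fits the standard convex-optimization template and admits a globally optimal solution via interior-point or Lagrange-dual methods.

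The main obstacle is the grouping step in the second paragraph: the natural telescoping of the interference-limited SINR terms does not cancel perfectly because the coefficients $|h_{nk}|^2$ differ across users within the same beam. The trick is to pair the negative term from user $k$ with the positive term from user $k+1$, since they share the same interference aggregate $S_{n,k+1}$, and then to exploit the ascending channel-gain ordering through the elementary inequality $a/(1+aS)\ge b/(1+bS)$ for $a\ge b$. Once this pairing is in place, the remaining work is essentially bookkeeping.
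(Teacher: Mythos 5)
Your proof is correct, but it takes a genuinely different route from the paper's. The paper rewrites (\ref{OP}) in standard minimization form, decomposes the objective beam by beam, and then computes the full Hessian of the per-beam negative rate explicitly, exhibiting it as a structured matrix with entries $\alpha_0+\beta_{\min(i,j)-1}$ and concluding positive semidefiniteness from $\alpha_0>0$ and $\beta_k\ge 0$, which in turn follow from the SIC ordering $\left|h_{nk}\right|\le\left|h_{nk'}\right|$ for $k\le k'$. You instead use the telescoping identity $1+\gamma_{nk}=\bigl(1+|h_{nk}|^2 S_{n,k}\bigr)/\bigl(1+|h_{nk}|^2 S_{n,k+1}\bigr)$ to regroup the per-beam rate into a sum of univariate functions of the affine interference aggregates $S_{n,k}$, and prove each is concave by a scalar second-derivative check that invokes the same ordering assumption; composition with affine maps and additivity finish the job. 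The two arguments rest on the identical structural fact (your pairwise inequality $a/(1+aS)\ge b/(1+bS)$ is exactly what makes the paper's $\beta_k$ nonnegative), but yours avoids the matrix computation and makes the role of the user ordering more transparent. You also handle C3 more carefully: the paper asserts convexity of $g_3^{nk}$ with a one-line justification that is questionable as stated (since $w_{nk}$ does depend on the powers of users $k'>k$), whereas your clearing of the positive denominator recasts C3 as a linear inequality, which establishes convexity of the feasible set cleanly. One minor point worth making explicit: the concavity of your bracketed terms holds for $S\ge 0$, so you should note that C1 (nonnegative powers) confines the problem to that region, which you do implicitly.
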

    \begin{proof}
    Please refer to the Appendix A.
    \end{proof}

     {Generally, the Karush-Kuhn-Tucker (KKT) conditions can be used to solve the  convex optimization problem in \eqref{OP}. However, the inequality constraints in the KKT approach usually makes the corresponding optimization problem hard to be solved directly. By adding a barrier function into the original objective function, the inequality constraints can be readily removed. Therefore, we adopt the barrier method \cite{Boyd2004Convex} to tackle the problem and get the optimal power allocation policy in an undemanding manner.}

\subsection{Optimization of Beam Allocation}

    The beam resource is another key multiplexing domain in our approach.
    Pattern design of the PDMA joint design approach  is mainly to optimize  beam allocation.
    In the following, the in-depth analysis on pattern structure  is presented, and then the relevant optimization of  beam allocation  is investigated.

\subsubsection{Pattern Structure}

    Beam allocation matrix $\boldsymbol{B}$ is specified as the PDMA pattern matrix. Let ${\left[ \boldsymbol{B} \right]_k} = {\boldsymbol{b}_k}$ denote the beam allocation vector of the \(k\)-th user.
    In other words, the transmit symbols for the UG are mapped into beam domain by means of  pattern, and the specific allocation of beam resources to the considered user is performed  through  the corresponding  beam allocation vector.

    The PDMA pattern matrix is utilized to assign different transmit diversity orders to users as well.
    To further analyze the pattern structure, we consider an example \(\boldsymbol{B}\) with \(N=3\) and \(K=5\) that is designed as follows
\begin{equation}\begin{split}\label{B35}
{\boldsymbol{B}} = \left[ {\begin{array}{*{20}{c}}
1&1&0&1&0\\
1&1&1&0&0\\
1&0&1&0&1
\end{array}} \right]\text{.}
\end{split}\end{equation}
    We show its factor graph in Fig. \ref{patternmatrix}.
    Each row of \(\boldsymbol{B}\) is referred to as a beam resource and each column of \(\boldsymbol{B}\) is referred to as a user.
    Therefore, the sum  of the corresponding column of \(\boldsymbol{B}\) for the considered user represents its transmit diversity order.
    The sum of the corresponding row in \(\boldsymbol{B}\) for the considered beam represents its overlap order.
    The ratio between the number of columns and the number of rows represents the overload ratio of the PDMA pattern.

\begin{figure}[!tbp]
\centering
\includegraphics[width=0.7\textwidth]{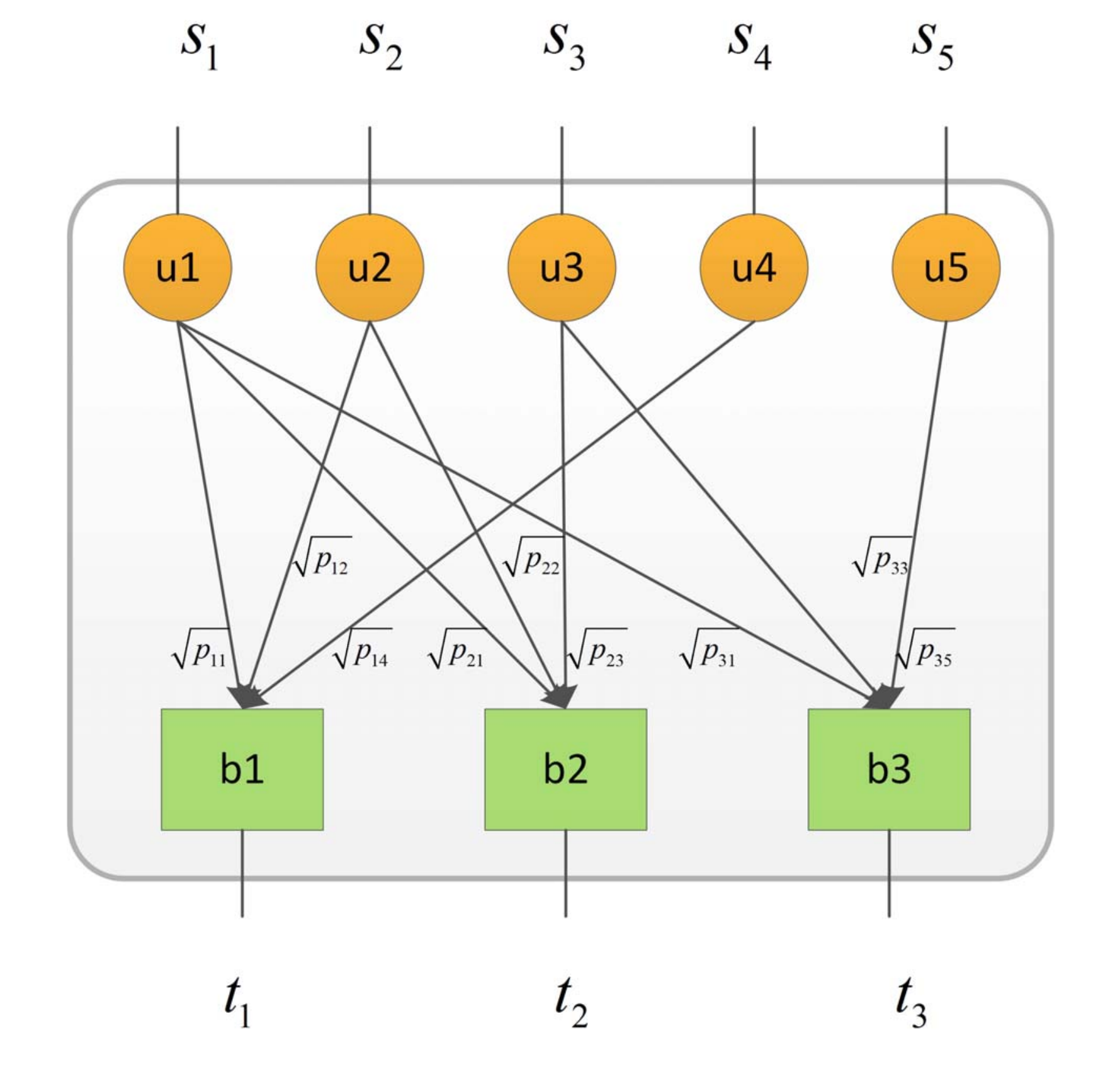}
\caption{The factor graph corresponding to the PDMA pattern with \(N=3\) and \(K=5\).}
\label{patternmatrix}
\end{figure}

    According to the above description  of the PDMA pattern, we further explore some insights on the  pattern structure.
\begin{itemize}
 \item   Firstly, the proposed PDMA approach can be treated as a more general framework of PD-NOMA  and OMA.
    Specifically, by setting all the transmit diversity orders to be one and all the overlap orders to be greater than one, the proposed PDMA approach can be transformed to a PD-NOMA one.
    On the other hand, it can be transformed to an OMA approach by setting all the transmit diversity orders and overlap orders to be one.

  \item  Secondly, the PDMA pattern can be utilized to improve  access connectivity.
    Let $\lambda$ denote the overload ratio supported by the PDMA pattern with  $\lambda=K/N$.
    By exploiting the PDMA pattern, the access connectivity in the PDMA approach can then be increased to a maximum of \(\left( {{2^N} - 1} \right)/N\) folds in comparison with the OMA approach.

  \item  Finally, the PDMA pattern enables dimension reduction for the LSA.
    Note  that the maximum transmit diversity order in pattern should be $N$ \cite{Dai2014Successive}, i.e., there is some certain user which transmits signal in all the beams.
    Therefore, for the case of narrow beamwidth, the maximum number of beams in one AC covering the corresponding UG cannot be too large.
    Specifically, a small matrix $\boldsymbol{B}$ is the common case for the proposed PDMA approach.
    Beams are generally chosen as spatial resources in the proposed approach other than antennas or space-time block codes in spatial domain \cite{Mao2016Pattern}.
    Therefore, when combined with the LSA, the dimension  of $\boldsymbol{B}$ is reduced from ${N_T} \times K$ to $N \times K$ with \(N_T \geq N\), which can further enable the computational reduction of the receiver.
\end{itemize}

\subsubsection{Optimization of Beam Allocation}

    The key  of the proposed  PDMA approach is pattern design, which can impose significant effect on system performance and detection complexity.
    On the one hand, for a given $N$, different overload ratio $\lambda$ can support varying access connectivity.
    The larger the overload ratio, the better the system performance and  the higher the detection complexity.
    On the other hand, for some certain user, larger  transmit diversity order can enable more reliable data transmission at the cost of  higher detection complexity.
    Therefore, proper pattern design is required to achieve  a good trade-off between 
    system performance and detection complexity.

    The inner product of  the beam allocation vectors of any two users indicates the number of beam resources shared by the two users.
    When the inner product is non-zero, the two users share some of the same beam resources.
    Then, the two users cannot be distinguished on  beam domain but power domain at the receiver.
     {When the inner product is zero, the two users employ  different beam resources.
    Then, they will have a large possibility not to interfere with each other on  beam domain,} which can greatly reduce the detection complexity at the receiver.
     {Therefore, by minimizing the maximum of the inner product of any two  beam allocation vectors, the inter-beam interference of any two beams can be reduced to the greatest extent and the overall sum rate can then be maximized as much as possible.
    According to the above discussions, we formulate the optimization problem of beam allocation in a heuristic way as follows}
\begin{equation}\begin{split}\label{OP_B}
\quad & \mathop {\min\max }\limits_{{\boldsymbol{b}}_k, {\boldsymbol{b}}_{k'}
} {\boldsymbol{b}}_k^T{{\boldsymbol{b}}_{k'}}\\
\quad  \text{s.t.}\ &
{   k,k' = 1,2,\cdots, K, \ k \neq k',} \\
&{\mu _1} \le {\mu _2} \le  \cdots  \le {\mu _k} \le  \cdots  \le {\mu _{K}}\text{,}
\end{split}\end{equation}
where ${\mu _k}$ denotes the transmit diversity order of the $k$-th user, and we assume that the beam allocation vectors of the $K$ users are sorted in an ascending order of their transmit diversity orders.
The optimization problem in (\ref{OP_B}) falls into the scope of combinatorial programming.
 {Just as we point out previously, direct solving  of the the optimization problem needs large computational load.}
Therefore, we commit to an effective dimension reduction method to solve this challenging problem.

By using elementary rank transformations, the matrix $\boldsymbol{B}$ can be reshaped in a partitioned structure: ${\boldsymbol{B}}' \triangleq \left[ {\bar{\boldsymbol{B}},\tilde{\boldsymbol{B}}} \right]$, where $\bar{\boldsymbol{B}} \in {\left[ {0,1} \right]^{N \times N}}$ refers to the pattern of all the users in the target user set $\Omega$, and $\tilde{\boldsymbol{B}} \in {\left[ {0,1} \right]^{N \times \left( {K - N} \right)}}$ refers to the pattern of all the other users among the UG.
Then, we can get the following theorem.
    \begin{theorem}
    If the BF matrix is constructed based on the ZF 
    or MMSE criteria and the SF matrix is also constructed based on the ZF 
    or MMSE criteria, $\bar{\boldsymbol{B}}$ can be set to  $\boldsymbol{I}$.
    \end{theorem}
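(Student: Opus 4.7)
The plan is to exploit the zero-interference property that ZF/MMSE BF enforces among the target users, together with the beam-separating property of ZF/MMSE SF at each receiver, to show that any off-diagonal one in $\bar{\boldsymbol{B}}$ is strictly wasteful and hence the optimal $\bar{\boldsymbol{B}}$ in (\ref{OP_B}) coincides with $\boldsymbol{I}$ up to a relabelling of the target users.

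First, I would invoke the defining property of the ZF BF construction (and of MMSE BF in the high-SNR or well-conditioned regime): the $n$-th BF vector $\boldsymbol{f}_n$, generated from the CSI of $n_{\text{tar}}$, is chosen so that $\boldsymbol{G}_{m_{\text{tar}}}\boldsymbol{f}_n=\boldsymbol{0}$ (or is vanishingly small for MMSE) for every $m\neq n$ in $\Omega$. This means that at the raw signal level in (\ref{received_signal}) the only beam whose payload $t_n$ reaches target user $n_{\text{tar}}$ is beam $n$ itself. Next, I would plug the ZF/MMSE SF identity $\boldsymbol{v}_{nk}^{H}\boldsymbol{G}_{k}\boldsymbol{f}_{n'}=\delta_{nn'}$ coming from (\ref{SF_ZF}) into (\ref{scalar_received_signal}) to verify that after SF the information streams at each target user are cleanly separated across beams, with only the $n$-th branch $z_{n n_{\text{tar}}}$ carrying a non-trivial signal. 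I would then interpret any off-diagonal allocation $[\bar{\boldsymbol{B}}]_{nm}=b_{n m_{\text{tar}}}=1$ with $n\neq m$ through the pattern map (\ref{superposedSignal}): it instructs beam $n$ to carry a fraction $\sqrt{p_{n m_{\text{tar}}}}\,s_{m_{\text{tar}}}$ of user $m_{\text{tar}}$'s symbol; by the two previous steps this fraction never reaches user $m_{\text{tar}}$, yet it still consumes part of beam $n$'s power budget $P_{\text{b}}$ and, via (\ref{SINR}), acts as pure intra-beam interference for user $n_{\text{tar}}$.

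Combining these observations, since the diagonal entries of $\bar{\boldsymbol{B}}$ are forced to one by the very definition of target users and each off-diagonal one is strictly harmful both for the sum-rate objective and for the max-inner-product objective in (\ref{OP_B}) (it creates an extra pair of target columns with positive inner product, degrading the min-max), the optimal $\bar{\boldsymbol{B}}$ must be $\boldsymbol{I}$. The main obstacle I expect is making the argument fully watertight for MMSE, whose BF does not literally null $\boldsymbol{G}_{m_{\text{tar}}}\boldsymbol{f}_n$; I would address this either by passing to the high-SNR limit where MMSE coincides with ZF, or by directly comparing the magnitudes of the within-beam gain $\boldsymbol{v}_{n n_{\text{tar}}}^{H}\boldsymbol{G}_{n_{\text{tar}}}\boldsymbol{f}_n$ and the cross-beam leakage $\boldsymbol{v}_{n n_{\text{tar}}}^{H}\boldsymbol{G}_{n_{\text{tar}}}\boldsymbol{f}_{n'}$ and showing that the same monotonicity argument on (\ref{SINR}) still penalises any off-diagonal allocation in $\bar{\boldsymbol{B}}$.
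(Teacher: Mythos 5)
Your proposal is correct and follows essentially the same route as the paper: you rely on the composite-channel ZF construction nulling the cross-links among target users together with the ZF/MMSE-SF identity $\boldsymbol{v}_{nk}^{H}\boldsymbol{G}_{k}\boldsymbol{f}_{n'}=\boldsymbol{e}_n^H\boldsymbol{e}_{n'}$, and conclude that any off-diagonal entry of $\bar{\boldsymbol{B}}$ is wasteful. The paper packages the same fact by showing that $\boldsymbol{T}_{n_{\rm{tar}}}=\boldsymbol{F}^H\boldsymbol{G}_{n_{\rm{tar}}}^H\boldsymbol{G}_{n_{\rm{tar}}}\boldsymbol{F}$ collapses to a single nonzero entry so that the equivalent normalized gain $\tilde{h}_{n n_{\rm{tar}}}\approx 0$ for $n\neq n_{\rm{tar}}$, which is exactly your raw-channel nulling statement $\boldsymbol{G}_{m_{\rm{tar}}}\boldsymbol{f}_n=\boldsymbol{0}$ seen through the receiver, and it is equally informal about the MMSE case (dismissed as ``similar'') and about the resulting near-singularity of the SF inverse, so your flagged obstacle is shared by the original proof.
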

    \begin{proof}
    Please refer to the Appendix B.
    \end{proof}
 {Note here that  the corresponding users in the target user set $\Omega$ will have a large possibility not to interfere with each other by setting $\bar{\boldsymbol{B}}$ to  $\boldsymbol{I}$, which apparently helps to simplify the optimization problem in \eqref{OP_B}.}
Let ${[ \tilde{\boldsymbol{B}} ]_k} = {\tilde{\boldsymbol{b}}_{k}}$ denote the vector of the \(k\)-th column of $\tilde{\boldsymbol{B}}$.
Then, according to Theorem 2,  the optimization problem in (\ref{OP_B}) can be transformed into the following equivalent form 
\begin{equation}\begin{split}\label{OP_B_2}
\quad & \mathop {\min\max }\limits_{\tilde{\boldsymbol{b}}_{k}, \tilde{\boldsymbol{b}}_{k'}} \tilde{\boldsymbol{b}}_{k}^T{\tilde{\boldsymbol{b}}_{k'}}\\
\quad  \text{s.t.} \ &
{   k,k' = 1,2,\cdots, K-N, \ k \neq k',} \\
& {\mu' _1} \le {\mu' _2} \le  \cdots  \le {\mu' _k} \le  \cdots  \le {\mu' _{K-N}}
\text{,}
\end{split}\end{equation}
where ${\mu ' _k}$ denotes the transmit diversity order of the $k$-th user concerning $\tilde{\boldsymbol{B}} $.
It can be seen that the dimension of the variable space for the optimization problem is greatly reduced from $2^{N{K}}$ to $2^{N{(K-N)}}$.
For the optimization problem in (\ref{OP_B_2}),
a small beam allocation matrix is the common case for the proposed PDMA approach, and consequently the complete enumeration method can be employed to achieve the optimal beam allocation.

 {Note that in practice there may exist inter-beam and inter-AC/UG interferences as well due to the possible beam overlap (e.g. between neighbouring beams) even when the inner product of  the beam allocation vectors of any two users is zero. However, by using our proposed approach, both the interferences and  the computational complexity can be really reduced.}
	
\section{Simulation Results}

    In this section, we evaluate the performance of our proposed PDMA joint design approach.
In the simulations, the number of antennas of each AC $N_T$ and that of each user $N_R$ are set to   \(16\) and \(4\), respectively.
    The BS is located in the cell center with {radius 800 meters.}
    It is assumed that all the users in the considered UG are distributed uniformly in their corresponding AC coverage. 
    For the propagation channel, it is assumed that the complex propagation coefficient between each antenna of the BS and that of each user is modeled as a complex small-scale fading factor timed by a large-scale fading factor, which represents geometric attenuation and shadow fading \cite{Marzetta2010Noncooperative}.
    For the small-scale fading factor, it is always assumed to be i.i.d. random variable with distribution \(\mathcal{CN}(0\text{, }1)\).
    For the large-scale fading factor, the path loss factor, the path loss exponent and the variance of log-normal shadow fading are set to  1, 3.7 and 10dB, respectively.
In the following, unless otherwise stated, we assume that {each AC  contains \(N=3\) beams,} and that the number of users varies from 4 to 7, which means that the proposed PDMA approach can perform under  different overload ratios (\(\lambda = 133\% \sim 233\%\)).

    In the simulations, the performance of the PDMA approach is firstly evaluated without any optimization on pattern mapping, i.e., simple power allocation and  beam allocation policies are adopted.
    For the simple power allocation policy \cite{Saito2013Non}, users are sorted in an ascending order with respect to the normalized channel gain.
    Let \(p_0\) denote the basic transmit power, which is allocated to the first ordered user.
    Then, the transmit power allocated to the \(k\)-th ordered user is set to  \({\mu ^{k-1}}{p_0}\), where \(\mu\) denotes the power gain factor.
    For the simple beam allocation policy, the basic pattern design criterion \cite{Dai2014Successive} is that a larger transmit diversity order is allocated to the user with smaller normalized channel gain, and vice versa.

\begin{figure}[!tbp]
\centering
\includegraphics[width=0.8\textwidth]{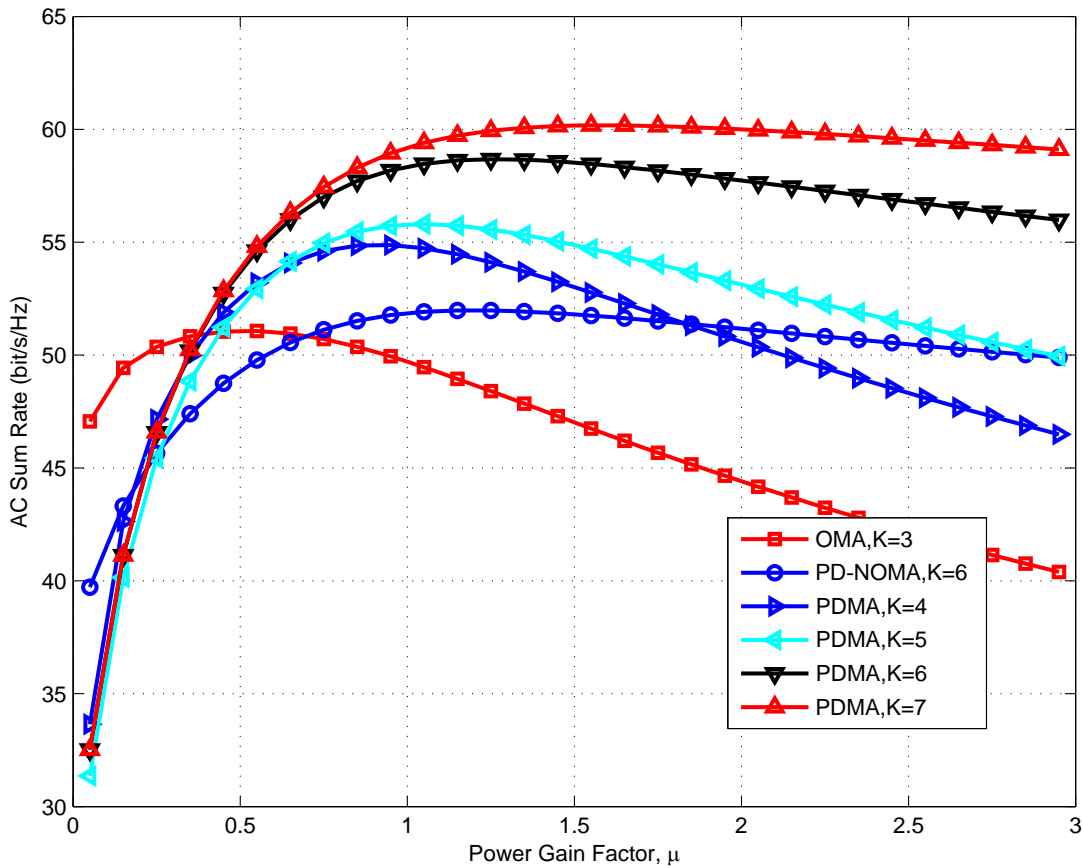}
\caption{AC sum rate versus power gain factor, where  simple power allocation and beam allocation policies are employed for the proposed PDMA approach.}
\label{simulation1}
\end{figure}

\begin{figure}[!tbp]
\centering
\includegraphics[width=0.8\textwidth]{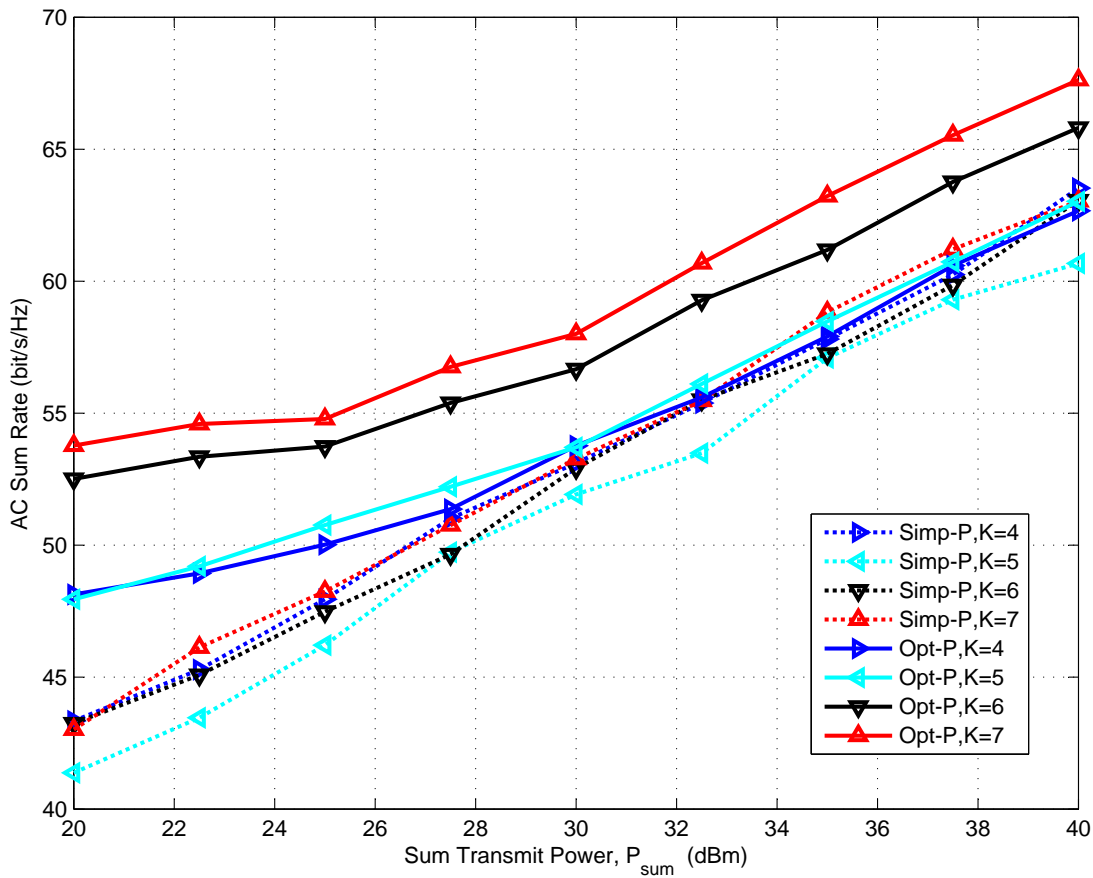}
\caption{AC sum rate versus sum transmit power, where a simple beam allocation policy is employed.}
\label{simulation2}
\end{figure}

\begin{figure}[!tbp]
\centering
\includegraphics[width=0.8\textwidth]{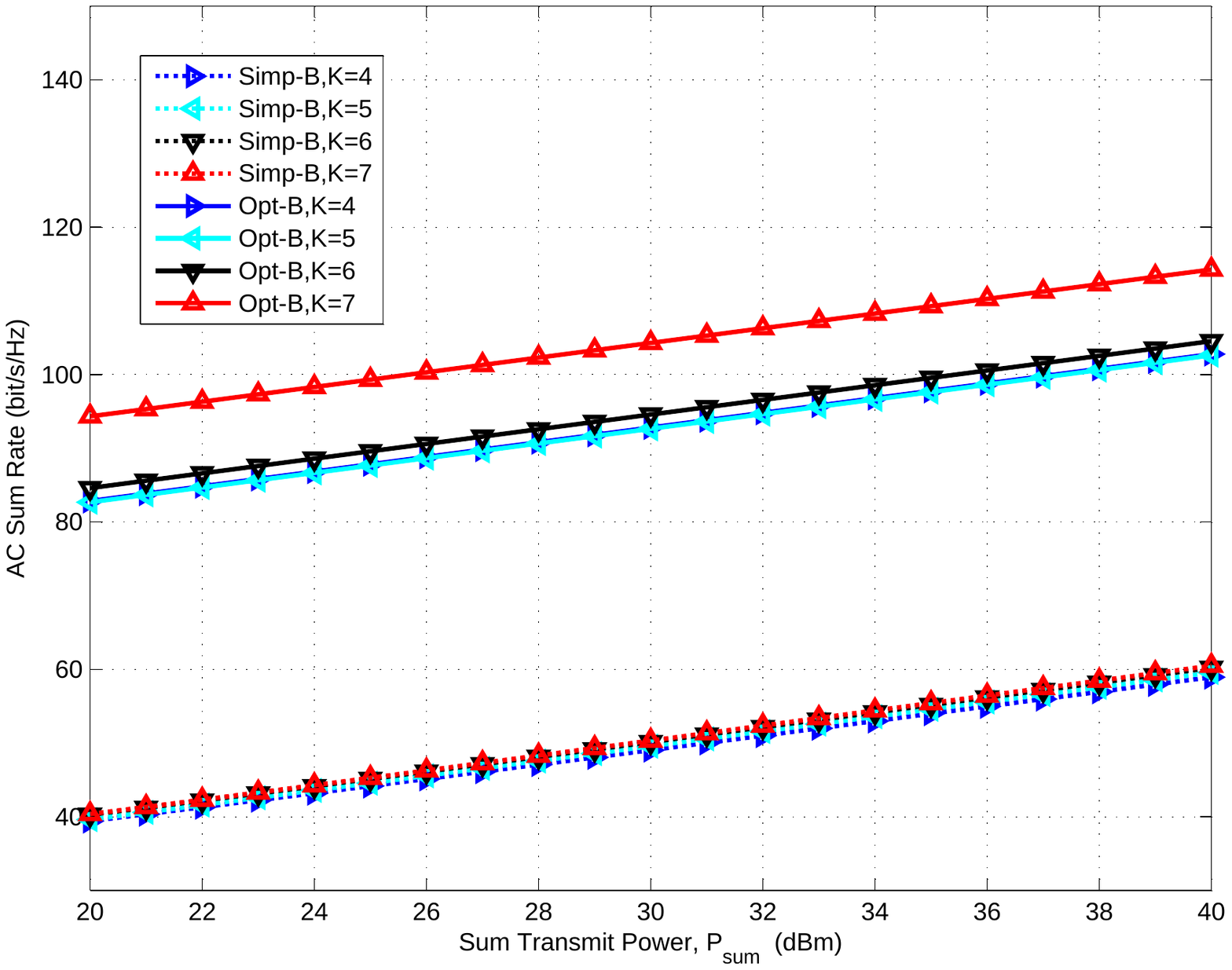}
\caption{AC sum rate versus sum transmit power, where a simple power allocation policy is employed with $\mu=0.5$.}
\label{simulation3}
\end{figure}

\begin{figure}[!tbp]
\centering
\includegraphics[width=0.8\textwidth]{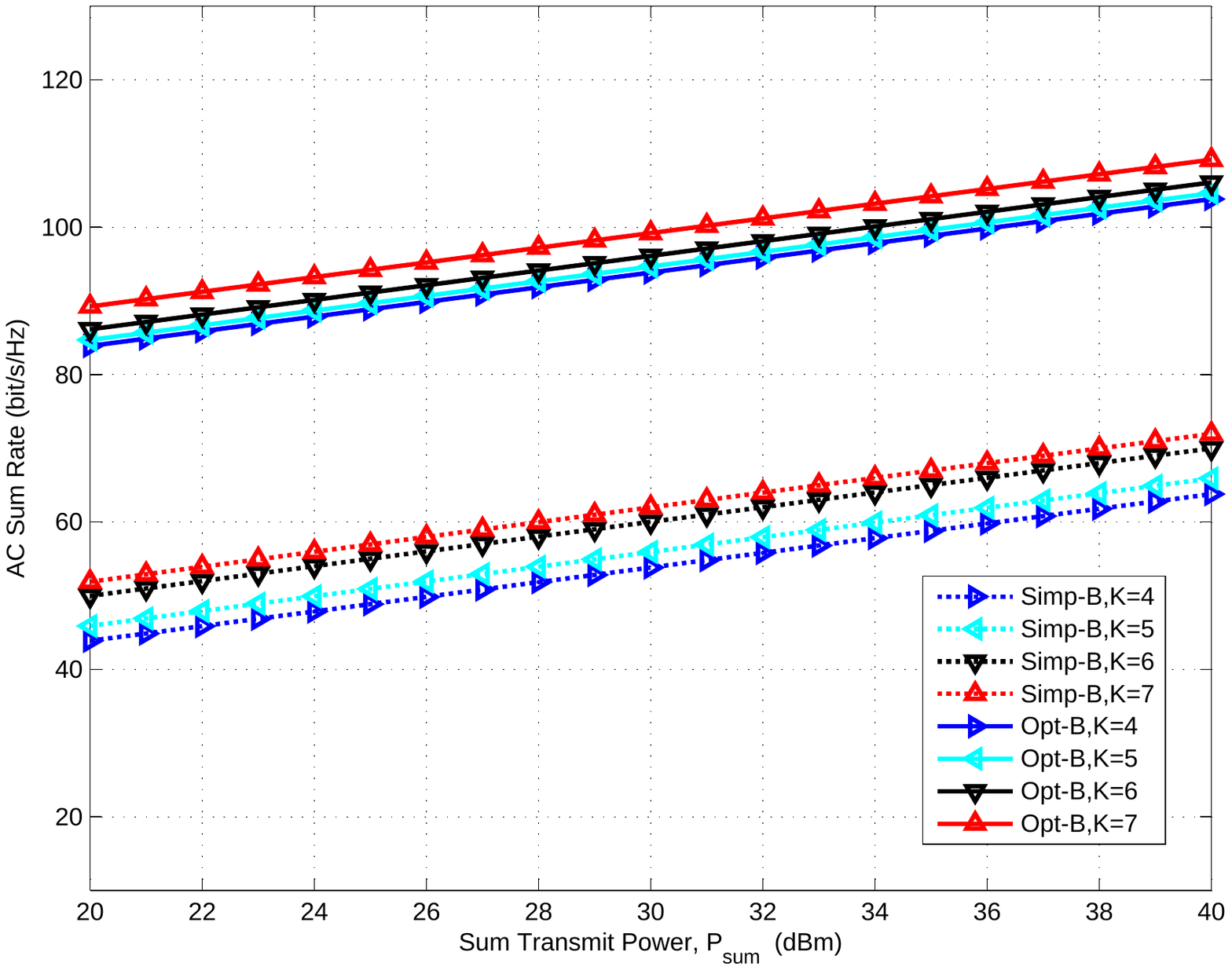}
\caption{AC sum rate versus sum transmit power, where a simple power allocation policy is employed with $\mu=1.5$.}
\label{simulation4}
\end{figure}

\begin{figure}[!tbp]
\centering
\includegraphics[width=0.9\textwidth]{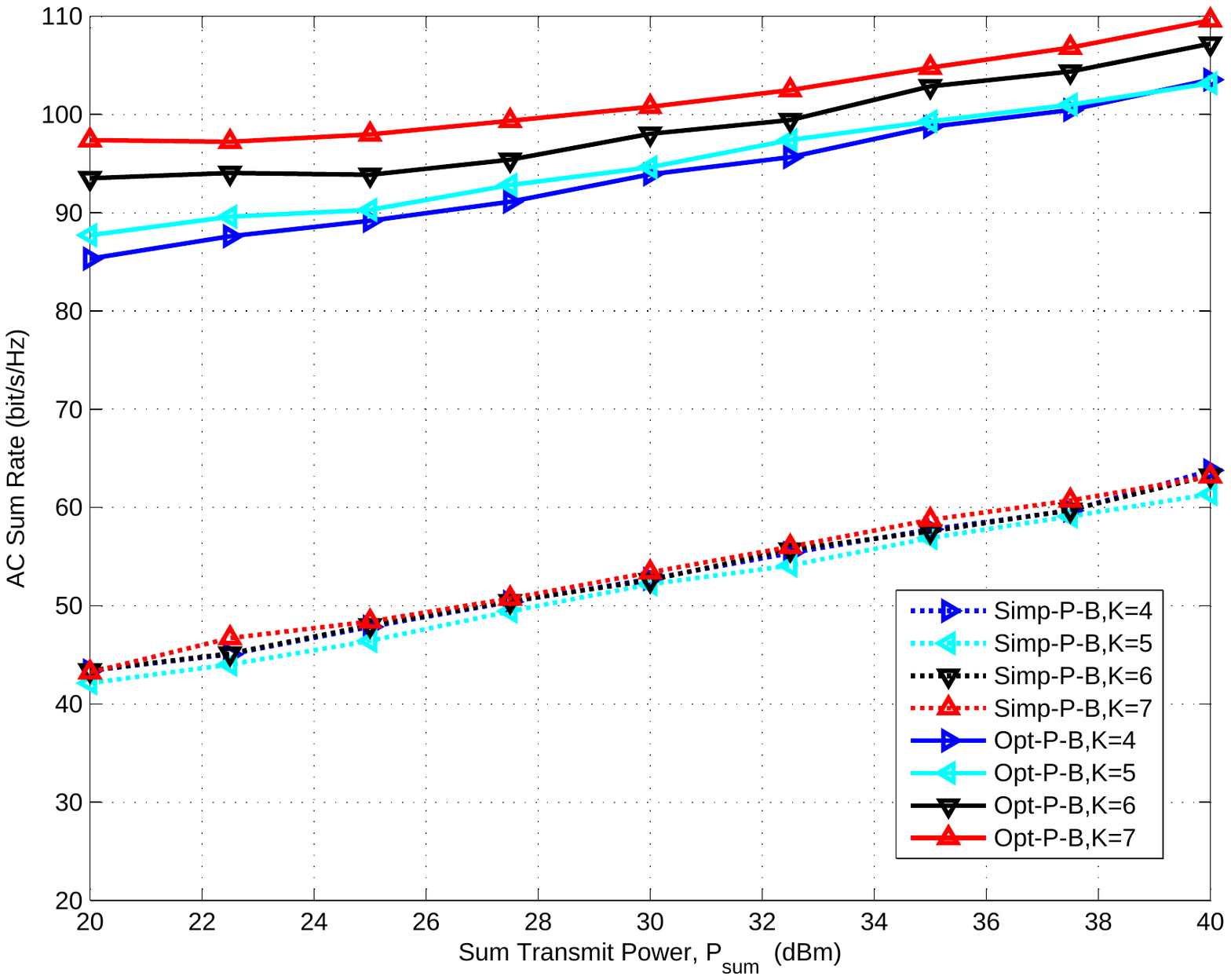}
\caption{AC sum rate versus sum transmit power, where  optimal power allocation  and  beam allocation policies are employed.}
\label{simulation5}
\end{figure}

    In Fig. \ref{simulation1}, we show the AC sum rate of the proposed PDMA approach. Also illustrated in the simulations as performance benchmarks are the AC sum rates of the OMA and PD-NOMA approaches.
    As for the OMA approach, the number of users is set to  3, i.e., each user monopolizes one beam.
    As for the PD-NOMA approach, the number of users is set to  6, i.e., every two users share one beam \cite{Saito2013Non}.
    As for the PDMA approach, the simple power allocation  and beam allocation policies are employed.
    The sum transmit power is set to  \(30\text{dBm}\) in the figure.
    Firstly, it can be observed from the figure that the performance of the proposed PDMA approach is better than those of the OMA and PD-NOMA approaches when the overload ratio $\lambda>1$.
     {Note here that a relatively higher computational complexity as the price will be paid for the better sum rate of our proposed approach compared with the other ``simple" or non-joint design.}
    Secondly, it can be observed that the performance  deteriorates drastically when the power gain factor $\mu$ decreases from 1 to 0, and the reason is that the proposed PDMA approach prefers user fairness to system performance if $\mu<1$ .
    It can also be observed that the performance deteriorates slowly when $\mu$ increases from 1, and the reason is that the proposed PDMA approach prefers system performance to user fairness if  $\mu>1$.
    When $\mu$ increases from 1, less power is allocated to the weaker user. Correspondingly, the performance of the user targeted by the beamforming  deteriorates drastically, which impairs the system performance.
    Thirdly, it can be observed that the performance  is not improved remarkably when $K$ is increased.
    The reason is that the weaker user is still targeted by the beamforming even when $K$ is increased.
    It shows the important role of the weaker user targeted by the beamforming on  system performance.

    In Fig. \ref{simulation2}, we compare the AC sum rates between the proposed PDMA approach employing a simple power allocation policy with $\mu=0.5$ and the proposed PDMA approach employing an optimal power allocation policy, where a simple beam allocation policy is employed.
    We can see that the performance of the proposed PDMA approach is improved slightly when power allocation is optimized.

    In Fig. \ref{simulation3} and Fig. \ref{simulation4}, we compare the AC sum rates between the proposed PDMA approach employing a simple beam allocation policy and the proposed PDMA approach employing an optimal beam allocation policy,
    where a simple power allocation policy is employed with $\mu = 0.5$  in Fig. \ref{simulation3} and $\mu = 1.5$ in Fig. \ref{simulation4}.
    It can be observed that the PDMA approach employing the optimal beam allocation policy  achieves great performance gain over the PDMA approach employing the simple beam allocation policy,
    which verifies the effectiveness of the proposed beam allocation policy.
    We can see from the figures that the performance of the proposed approach with $\mu=1.5$ in Fig. \ref{simulation4} varies in a wider range than that with $\mu=0.5$ in Fig. \ref{simulation3} when the simple beam power allocation policy is utilized,
    which reveals that the impact of stronger users on system performance is larger when beam allocation is not optimized.
    We can also see from the figures that the performance of the proposed approach with $\mu=0.5$ in Fig. \ref{simulation3} varies in a wider range than the scheme with $\mu=1.5$ in Fig. \ref{simulation4} when the optimal beam power allocation policy is utilized,
    which reveals that the impact of weaker users on system performance is larger when beam allocation is optimized. 

    In Fig. \ref{simulation5}, we compare the AC sum rates between the proposed PDMA approach employing  simple power allocation and  beam allocation policies and the proposed PDMA approach employing  optimal power allocation and  beam allocation polices.
    We can see from the figure that the  performance of the proposed PDMA approach is improved greatly when both power allocation and beam allocation are optimized.
    It can be seen from  Fig. \ref{simulation2} to Fig. \ref{simulation5}  that the optimization of beam allocation yields a significant improvement of system performance than the optimization of power allocation.

\section{Conclusions}

    In this paper, we have proposed a  PDMA  joint design approach based on  both the power  domain and beam domain,
    which can be treated  as a more general framework  of the PD-NOMA and OMA approaches.
    The proposed PDMA approach enables the integration of a LAS into multiple access naturally.
    By using the optimized  power allocation and beam allocation policies, great performance gain has been achieved.
     {For the future work, we would like to explore the sensitivity of our proposed approach to CSI estimation and feedback errors.}


\begin{appendices}

\section{Proof of Theorem 1}

    The optimization problem in (\ref{OP}) can be reshaped as a standard form problem \cite{Boyd2004Convex} as follows
\begin{equation}\begin{split}\label{OP2}
\quad & \mathop {\min }\limits_{\boldsymbol{P}} \mathop {}\nolimits^{} \mathop f{\left( \boldsymbol{P} \right)} = - \sum\limits_{n = 1}^N {\sum\limits_{k = 1}^K {{{\log }_2}\left( {1 + \frac{{{\left| h_{nk}\right|}^2{b_{nk}p_{nk}}}}{{1 + {\left| h_{nk}\right|}^2 w_{nk} }}} \right)} } \\
\text{s.t.} \  & \text{C1:}\ g_1^{nk}{\left( \boldsymbol{P} \right)} = {\delta_{nk}} - {p _{nk}} \le 0{, } \ n=1,2,\cdots,N, \ k=1,2,\cdots,K, \\
\quad & \text{C2:}\ g_2{\left( \boldsymbol{P} \right)} = \sum\limits_{n = 1}^N {\sum\limits_{k  = 1}^K {{p_{nk}}} } - {P_{{\text{sum}}}} \le 0{, } \\
\quad & \text{C3:}\ g_3^{nk}{\left( \boldsymbol{P} \right)} = {R_{nk}^{{\text{min}}}} - {\log _2}\left( {1+\frac{{{\left| h_{nk}\right|}^2{b_{nk}p_{nk}}}}{{1 + {\left| h_{nk}\right|}^2w_{nk} }}} \right) \le 0{, } \ n=1,2,\cdots,N, \ k=1,2,\cdots,K,
\end{split}\end{equation}
    where
${w_{nk}} = \left\{ {\begin{array}{*{20}{l}}
{\sum\limits_{k' = k + 1}^K {{b_{nk'}p_{nk'}}}{, }} & {n=1,2,\cdots,N, \ k = 1{, }2, \cdots {, }K - 1},\\
{0{, }}  & {n=1,2,\cdots,N, \ k  = K}.
\end{array}} \right.$

    Next, we analyze the properties of the functions $f{\left( \boldsymbol{P} \right)}$, $g_1^{nk}{\left( \boldsymbol{P} \right)}$, $g_2{\left( \boldsymbol{P} \right)}$ and $g_3^{nk}{\left( \boldsymbol{P} \right)}$.
    It is obvious that  $g_1^{nk}{\left( \boldsymbol{P} \right)}$ is affine in $\boldsymbol{P}$ for any $n$ and $k$ and that  $g_2{\left( \boldsymbol{P} \right)}$ is affine in $\boldsymbol{P}$ as well.
    Due to the independence of $|h_{nk}|^2$ and $w_{nk}$ on $p_{nk}$, $g_3^{nk}{\left( \boldsymbol{P} \right)}$ is convex in $\boldsymbol{P}$ for any $n$ and $k$.
    As for $f{\left( \boldsymbol{P} \right)}$, it can be readily seen that $w_{nk}$ couples multiple variables with respect to $k$ but not $n$.
    Then, according to the transitivity of the convexity [20], the convexity of $f {\left( \boldsymbol{P} \right)}$ can be derived from the convexity of the function $f {\left( \boldsymbol{p}_n \right)} = - \sum\limits_{k = 1}^K {{{\log }_2}\left( {1 + \frac{{{\left| h_{nk}\right|}^2{b_{nk}p_{nk}}}}{{1 + {\left| h_{nk}\right|}^2{w_{nk}}}}} \right)}$ with ${\boldsymbol{p}_n} = \left[ \boldsymbol{P} \right]_n^T$ for $n=1,2,\cdots,N$.

    Now we investigate the convexity of  $f {\left( \boldsymbol{p}_n \right)}$.
    Let ${\nabla^2}f{\left( {\boldsymbol{p}_n} \right)}$ denote the Hessian matrix of  $f{\left( {\boldsymbol{p}_n} \right)}$ whose elements are given by (\ref{2pd}) at the bottom of next page.
    Define
\setcounter{equation}{17}
\begin{align}
{\alpha_0} &\buildrel \Delta \over = \frac{1}{{{{\left( {\frac{1}{{{\left| h_{n1}\right|}^2}} + \sum\limits_{l = 1}^K {{b_{nl}p_{nl}}} } \right)}^2}\ln 2}}{,} \\
{\beta_k}& \buildrel \Delta \over = \frac{1}{{\ln 2}}\sum\limits_{k' = 1}^{k} {\left( {\frac{1}{{{{\left( {\frac{1}{{{\left|h_{n(k' + 1)}\right|}^2}} + {w_{nk'}}} \right)}^2}}} - \frac{1}{{{{\left( {\frac{1}{{{\left| h_{nk'}\right|}^2}} + {w_{nk'}}} \right)}^2}}}} \right)}, \ k =1, 2, \cdots, K-1{.}
\end{align}
Then, the Hessian matrix can be expressed as follows
\begin{equation}\begin{split}\label{hessian}
{\nabla^2}f{\left( {{\boldsymbol{ p}_n}} \right)} =
\left[
\begin{array}{*{20}{c}}
{{\alpha_0}} & {{\alpha_0}} & {{\alpha_0}} & {\cdots} & {{\alpha_0}} \\
{{\alpha_0}} & {{\alpha_0}+{\beta_1}} & {{\alpha_0}+{\beta_1}} & {\cdots} & {{\alpha_0}+{\beta_1}} \\
{{\alpha_0}} & {{\alpha_0}+{\beta_1}} & {{\alpha_0}+{\beta_2}} & {\cdots} & {{\alpha_0}+{\beta_2}} \\
{\vdots} & {\vdots} & {\vdots} & {\ddots} & {\vdots} \\
{{\alpha_0}} & {{\alpha_0}+{\beta_1}} & {{\alpha_0}+{\beta_2}} & {\cdots} & {{\alpha_0}+{\beta_{K-1}}}
\end{array}
\right], \ n =1,2,\cdots,N{.}
\end{split}\end{equation}
    Recall that the property of SIC: \(\left|{h_{nk}}\right| \le \left|{h_{nk'}}\right|\) holds for any \(n\) if \(1 \le k \le k' \le K\).
    Then,
    it can be readily obtained that that $\alpha_0 > 0$ and $\beta_k \ge 0$ for \(1 \le k \le K-1\).
    Correspondingly, we have ${\nabla ^2}f\left( {{\boldsymbol{p}_n}} \right) \succeq {\bf{0}}$.
    Therefore, the function $f{\left( \boldsymbol{p}_n \right)}$ is convex in $\boldsymbol{p}_n$ for any $n=1,2,\cdots,N$.

    According to the above discussions, the optimization problem in (\ref{OP}) can be readily proved to be convex. This completes the proof.

\newcounter{TempEqCnt}
\setcounter{TempEqCnt}{\value{equation}}
\setcounter{equation}{2}
\begin{figure*}[!b]
\hrulefill
\vspace*{4pt}
\setcounter{equation}{16}
\begin{multline}\label{2pd}
{\nabla ^2}f{\left( {{{\boldsymbol p}_n}} \right)_{ij}} = \\
\left\{ {\begin{array}{*{20}{l}}
{\frac{1}{{{{\left( {\frac{1}{{{\left| h_{n1}\right|}^2}} + \sum\limits_{k' = 1}^K {{ b_{nk'}p_{nk'}}} } \right)}^2}\ln 2}}\text{, } }&{   {i = 1\text{, }\forall j,}  \ \text{or} }\\
{} &{ {j = 1\text{, }\forall i}  ,}\\
{\frac{1}{{{{\left( {\frac{1}{{{\left| h_{n1}\right|}^2}} + \sum\limits_{k' = 1}^K {{ b_{nk'}p_{nk'}}} } \right)}^2}\ln 2}} + \frac{1}{{\ln 2}}\sum\limits_{k' = 1}^{\min \left( {i\text{, }j} \right) - 1} {\left( {\frac{1}{{{{\left( {\frac{1}{{{\left| h_{n(k' + 1)} \right|}^2}} + {w_{nk'}}} \right)}^2}}} - \frac{1}{{{{\left( {\frac{1}{{{\left| h_{nk'}\right|}^2}} + {w_{nk'}}} \right)}^2}}}} \right)}\text{, }  }&{ i \ne 1\text{, } j \ne 1.}
\end{array}} \right.
\end{multline}
\setcounter{equation}{\value{TempEqCnt}}
\end{figure*}

\section{Proof of Theorem 2}
     {To prove that $\bar{\boldsymbol{B}}$ can be set to  $\boldsymbol{I}$ if the BF and SF matrices are constructed based on the ZF
    or MMSE criteria,}
    we take the case for ZF-BF and ZF-SF as examples, and the case for  MMSE-BF and  MMSE-SF is similar.
     {Firstly, by using elaborate mathematical manipulations, the considered channel matrix, the ZF-BF matrix, and the ZF-SF matrix can be reexpressed neatly. Then, by analyzing the relevant items of the equivalent normalized channel gain in \eqref{h_nk} with the reexpressed matrices, particular characteristics can be obtained. Correspondingly, the conclusion can be readily achieved. We present the proof in detail as follows.}

    Let ${\boldsymbol{G}_C} = {\left[ {\boldsymbol{G}_{{1_{{\rm{tar}}}}}^H,\boldsymbol{G}_{{2_{{\rm{tar}}}}}^H, \cdots ,\boldsymbol{G}_{{n_{{\rm{tar}}}}}^H, \cdots ,\boldsymbol{G}_{{N_{{\rm{tar}}}}}^H} \right]^H}$ denote the composite channel matrix of the target users. 
    Then, ${\boldsymbol{G}_{{n_{{\rm{tar}}}}}} = {\boldsymbol{X}_{n_{{\rm{tar}}}}} {\boldsymbol{G}_C}$ for user ${n_{{\rm{tar}}}}$ with $ \boldsymbol{X}_{n_{{\rm{tar}}}} = \boldsymbol{e}_{n_{{\rm{tar}}}}^H \otimes {{\boldsymbol{I}}_{{N_R}}}$.
    As for ZF-BF, let ${\boldsymbol{F}_C} \in {\mathbb{C}^{{N_T} \times {N_R}N}}$ denote the composite ZF-BF matrix. Then, it can be expressed as ${\boldsymbol{F}_C} = \boldsymbol{G}_C^H{\left( {{\boldsymbol{G}_C}\boldsymbol{G}_C^H} \right)^{ - 1}}$ if ${N_R}N < {N_T}$ or ${\boldsymbol{F}_C} ={\left( {{\boldsymbol{G}_C^H}\boldsymbol{G}_C} \right)^{ - 1}}\boldsymbol{G}_C^H$ otherwise.
    Because ${N_R}N < {N_T}$ is always guaranteed by the LSA, we take ${\boldsymbol{F}_C} = \boldsymbol{G}_C^H{\left( {{\boldsymbol{G}_C}\boldsymbol{G}_C^H} \right)^{ - 1}}$ for example.
    It can be reshaped in a partitioned structure ${\boldsymbol{F}_C} = \left[ {\boldsymbol{F}_1^{},\boldsymbol{F}_2^{}, \cdots ,{\boldsymbol{F}_n}, \cdots ,\boldsymbol{F}_N^{}} \right]$.
    Correspondingly, the BF vector ${\boldsymbol{f}_n} $ can be expressed in the following form: ${\boldsymbol{f}_n} = {\boldsymbol{F}_n}{{\bf{1}}^{{N_R} \times 1}}$.
    Therefore, the ZF-BF matrix can be expressed as $\boldsymbol{F} = \left[ {\boldsymbol{f}_1^{},\boldsymbol{f}_2^{}, \cdots ,{\boldsymbol{f}_n}, \cdots ,\boldsymbol{f}_N^{}} \right] = {\boldsymbol{F}_C}{\boldsymbol{Y}}$ with $\boldsymbol{Y}  = {{\boldsymbol{I}}_{N}} \otimes {{\boldsymbol{1}}_{N_R}}$.
    As for ZF-SF, the ZF-SF matrix of the $k$-th user can be expressed as ${\boldsymbol{V}_k} = {\boldsymbol{G}_k}\boldsymbol{F}{\left( {{\boldsymbol{F}^H}\boldsymbol{G}_k^H{\boldsymbol{G}_k}\boldsymbol{F}} \right)^{ - 1}}$ for $k=1,2,\cdots,K$.

    Through proper mathematical manipulations, we have
    \begin{equation}
    \boldsymbol{v}_{nk}^H{\boldsymbol{G}_k}{\boldsymbol{f}_{n'}} = \boldsymbol{e}_n^H{\left( {{\boldsymbol{F}^H}\boldsymbol{G}_k^H{\boldsymbol{G}_k}\boldsymbol{F}} \right)^{ - 1}}{\boldsymbol{F}^H}\boldsymbol{G}_k^H{\boldsymbol{G}_k}\boldsymbol{Fe}_{n'}^{} = \boldsymbol{e}_n^H\boldsymbol{e}_{n'}^{}, \
    n, n'= 1,2,\cdots, N, \ k=1,2,\cdots,K.
    \end{equation}
    Then, the expression in (\ref{h_nk}) can be simplified to be ${\tilde{h}_{nk}} =  {\frac{1}{\sqrt{\sigma _k^2{\xi_{nk}}}}}$, where
    \begin{equation}
    {\xi_{nk}} \buildrel \Delta \over = \boldsymbol{v}_{nk}^H{\boldsymbol{v}_{nk}} = \boldsymbol{e}_n^H{\left( {{\boldsymbol{F}^H}\boldsymbol{G}_k^H{\boldsymbol{G}_k}\boldsymbol{F}} \right)^{ - 1}}\boldsymbol{e}_n^{}.
    \end{equation}
    Define $\boldsymbol{T}_k = {{\boldsymbol{F}^H}\boldsymbol{G}_k^H{\boldsymbol{G}_k}\boldsymbol{F}}$. Then,  for user $n_{\rm{tar}}$, $\boldsymbol{T}_{n_{\rm{tar}}}$ can be reshaped as follows 
\begin{equation}\begin{split}\label{T_k}
{\boldsymbol{T}_{{n_{{\rm{tar}}}}}}  
&
={\boldsymbol{Y} ^H}\boldsymbol{F}_C^H\boldsymbol{G}_C^H{\boldsymbol{X}_{n_{{\rm{tar}}}} ^H}\boldsymbol{X}_{n_{{\rm{tar}}}} {\boldsymbol{G}_C}{\boldsymbol{F}_C}\boldsymbol{Y} \\
&={\boldsymbol{Y} ^H}{\boldsymbol{X}_{n_{{\rm{tar}}}} ^H}\boldsymbol{X}_{n_{{\rm{tar}}}} \boldsymbol{Y} \\
&\triangleq {\boldsymbol{\Pi}_{n_{{\rm{tar}}}} }.
\end{split}\end{equation}
According to the above definitions of $\boldsymbol{X}_{n_{{\rm{tar}}}}$ and  $\boldsymbol{Y}$,   it can be readily obtained that  all the elements of the square matrix $\boldsymbol{\Pi}_{n_{{\rm{tar}}}}$ are zeros except  ${\left[ \boldsymbol{\Pi}_{n_{{\rm{tar}}}}  \right]_{{n_{{\rm{tar}}}}{n_{{\rm{tar}}}}}} = {1}$.
    Therefore, when ${\tilde{h}_{n n_{\rm{tar}}}}$ is calculated through the  expression in (\ref{h_nk}) for user $n_{\rm{tar}}$ on all the beams except the $n_{\rm{tar}}$-th beam, the noise term will be unusually magnified and  ${\tilde{h}_{n n_{\rm{tar}}}}$ will be approximately equal to zero for any $n=1,2,\cdots,N$ and $n \neq n_{\rm{tar}}$, which means that
the $n_{\rm{tar}}$-th user may be located in the coverage edge of the $n$-th beam.
    Obviously, neither power resources nor  beam resources allocated to the $n_{\rm{tar}}$-th user on the $n$-th beam can contribute much to  system performance.
    Correspondingly, the relevant  $\tilde{b}_{n n_{\rm{tar}}} $ can be set to zero if $\tilde{h}_{n n_{\rm{tar}}} \approx 0$.
    Then, according to the above discussions, $\bar{\boldsymbol{B}} = \boldsymbol{I}$ can be readily obtained. This completes the proof.

\end{appendices}

%

\bibliographystyle{IEEEtran}
\bibliography{manuscript_reference}

\end{document}